\renewcommand{\cite}{\citep}
\newcommand{\secfnt}{\fontsize{13}{17}}
\newcommand{\ssecfnt}{\fontsize{11}{14}}
\newcommand{\sssecfnt}{\fontsize{10}{14}}
\titlespacing*{\section} {0pt}{3.5ex plus 1ex minus .2ex}{2.3ex plus .2ex}
\titlespacing*{\subsection} {0pt}{3.25ex plus 1ex minus .2ex}{1.5ex plus .2ex}
\titlespacing*{\subsubsection} {0pt}{3.0ex plus 1ex minus .2ex}{1.0ex plus .2ex}
\newcommand*{\TitleFont}{%
	\usefont{\encodingdefault}{\rmdefault}{b}{n}%
	\fontsize{20}{24}%
	\selectfont}
\newtheorem{theorem}{Theorem}
\newtheorem{problem}{Problem}
\newtheorem{fact}[theorem]{Fact}
\newcommand{\cCW}{{\mathcal {CW}}}
\newcommand{\cX}{{\cal X}}
\def\beq{\begin{equation}}\def\eeq{\end{equation}}
\begin{document}

\title{\TitleFont Scalable Approximation Algorithm for Network Immunization}
\date{}
\author{
	\textbf{Juvaria Tariq}\\ {\small Dept. of Computer Science}	\\ {\small Lahore University of Management Sciences}\\ 14070004@lums.edu.pk
	\\ \\
	\textbf{Imdadullah Khan}\\ 	{\small Dept. of Computer Science}	\\ {\small Lahore University of Management Sciences}\\ imdad.khan@lums.edu.pk
	\and
	\textbf{Muhammad Ahmad}\\	 {\small Dept. of Computer Science}	\\ {\small Lahore University of Management Sciences}\\ 14030004@lums.edu.pk
	\\ \\
	\textbf{Mudassir Shabbir}\\	 {\small Dept. of Computer Science}	\\  {\small Information Technology University, Lahore}	\\ mudassir.shabbir@itu.edu.pk
}





\maketitle

\begin{abstract}
	The problem of identifying important players in a given network is of pivotal importance for viral marketing, public health management, network security and various other fields of social network analysis. In this work we find the most important vertices in a graph $G=(V,E)$ to immunize so as the chances of an epidemic outbreak is minimized. This problem is directly relevant to minimizing the impact of a contagion spread (e.g. flu virus, computer virus and rumor) in a graph (e.g. social network, computer network) with a limited budget (e.g. the number of available vaccines, antivirus software, filters).  It is well known that this problem is computationally intractable (it is NP-hard). In this work we reformulate the problem as a budgeted combinational optimization problem and use techniques from spectral graph theory to design an efficient greedy algorithm to find a subset of vertices to be immunized. We show that our algorithm takes less time compared to the state of the art algorithm. Thus our algorithm is scalable to networks of much larger sizes than best known solutions proposed earlier. We also give analytical bounds on the quality of our algorithm. Furthermore, we evaluate the efficacy of our algorithm on a number of real world networks and demonstrate that the empirical performance of algorithm supplements the theoretical bounds we present, both in terms of approximation guarantees and computational efficiency.
  
\end{abstract}
\vspace{.1in}

 \textbf{Keywords:} Graph Immunization, eigendrop, closed walks

\section*{Introduction}
Pairwise interactions between homogeneous entities are commonly modeled as a network. Entities could be computer (computer networks), humans (social network) and electronic components (electricity distribution network) to name a few. Such graphs usually are very large and analytics on them is of pivotal importance among many others, in the field of scientific and engineering research, economics, public health management, and business analytics. In all of these scenarios these systems are threatened by the propagation of harmful entities that moves from a node to its neighbor(s). For instance when a person catches a particular viral disease, the virus will contaminate other people interacting with an already sick person. Similarly, computer virus spread in the whole network with information sharing, and component failure in an electric grid would cause erroneous functionality in the whole network. 

If not contained, this spread of malicious contents will result in an outbreak in the graph with serious effects on the functionality of network. We want to provide some extra capability to some of the nodes in a graph such that those selected nodes will neither be contaminated by malicious content nor will they pass it on to their neighbors. We refer to this as these nodes are immunized. Given a graph topology, our goal is to immunize a subset of nodes that will maximally hinder the spread of undesirable content. As there is cost associated with immunization of nodes, we can only immunize a subset of nodes (not exceeding the given budget). 

We abstractly formulate, this problem, known as the {\em Network Immunization Problem} \cite{chen2016node}, as follows. 
\begin{problem}
	Given an undirected graph $G = (V,E)$, $|V|=n$, and an integer $k<n$, find a set $S$ of $k$ nodes such that `` immunizing'' nodes in $S$, renders $G$ the least ``vulnerable'' to an attack over all choices of $S$ such that $|S|=k$.
\end{problem}
We need to formally define immunizing a node and vulnerability of the graph for a precise formulation of the problem. We use the SIS model of infection spread, where once a node is immunized it remains protected that time on. We also need to quantify graph's vulnerability that is our objective. 

\subsection*{Definitions and Problem Formulation}

For a graph $G=(V,E)$, $A(G)$ denotes the adjacency matrix of $G$ or just $A$ when the graph is clear from the context. For a subset of nodes $S\subset V$, $G_S$ is the subgraph induced by nodes in $S$ and $A_S$ represents its adjacency matrix (i.e. $A(G_S)$). When $S$ is a subset of $V$, $S\subset V$, $G^{[S]}=G_{V\setminus S}$, that is $G^{[S]}$ is the subgraph obtained after removing the nodes of $S$. We denote by $A^{[S]}$ the adjacency matrix of $G^{[S]}$.

$\{\lambda_i(G)\}_{i=1}^n$ or $\{\lambda_i(A)\}_{i=1}^n$ is the eigen spectrum of the graph $G$ or its adjacency matrix $A$. Where $\lambda_{max}(G)=\max_{i} \lambda_i(G)$, is the largest eigenvalue of $A$ (also called spectral radius of $G$) \cite{chung1997}.

For a graph, its epidemic threshold is one of its intrinsic properties. It is of interest to us, as it is well known that if {\em virus strength} is more than the epidemic threshold of the graph, then an outbreak will occur. From the epidemiology literature we get that the epidemic threshold of a graph depends upon the largest eigenvalue of $G$ \cite{chakrabarti2008epidemic}. Hence a common parameter to measure network's vulnerability is the largest eigenvalue of the adjacency matrix of the graph \cite{chen2016node,Ahmad2016}. In this context, our objective reduces to selecting a subset of nodes so as the remaining graph has the as small largest eigenvalue as possible. More precisely, supposed $\lambda_{max}(A^{[S]})$ is the largest eigenvalue of the $G^{[S]}$. Our problem can be formulated as follows: 

\begin{problem}
	\label{problem:1}
	Let $G =(V,E)$ be an undirected graph and let $k$ be an integer $k<|V|$, find a subset of nodes $S\subset V$, with $|S|=k$ such that $\lambda_{max}(A^{[S]})$ is the minimum	possible over all $k$-subsets of $V$.
\end{problem}
In this work we model the Problem \ref{problem:1} as a budgeted combinatorial optimization objective function. Using tools from linear algebra and graph we establish the relationship between the original objective function and the one that we formulate. We define a score of each vertex that is based on the number of closed walks in the graph containing the vertex. We design a randomized approximation algorithm to estimate score of each vertex and greedily select nodes for immunization. Using the fact that our objective function is monotone and sub-modular, we prove a tight analytical guarantee on the quality of our estimate. In addition to theoretical bounds on the quality and runtime of our algorithm we also evaluate our algorithm on various real world graphs. We demonstrate that we achieve up to $100\%$ improvement in terms of drop in vulnerability. Moreover, running time of our algorithm is substantially lower than that of existing solutions. 

\subsection*{Organization}
We provide an outline for the remaining paper. In the following section we provide a detailed background to Problem~\ref{problem:1} and discuss its computational intractability and approaches to approximate it. Our proposed algorithm is presented in the section following that, which also contains approximation guarantees and complexity analysis of our algorithm. The subsequent section contains immunization results from tests of our algorithm on several real world graphs. We also provide performance comparisons of our algorithm with other known algorithms. Section \ref{section:related_work} contains detailed literature review on the problem. A brief conclusion of this work and discussion on future directions is given in the last section.

\section*{Background}\label{section:background}

Eigendrop quantifies the gain after immunizing a set $S$ of $k$ nodes. We can compute eigendrop by $\lambda_{max}(G) -  \lambda_{max}(G^[S])$. Eigendrop depicts how much graph vulnerability has been reduced after immunizing node set $S$. Brute force technique cannot be applied as a solution to problem ~\ref{problem:1} since computing eigenvalue corresponding to ${n \choose k}$ different sets gives $O({n \choose k}\cdot m)$ runtime of the method (largest eigenvalue of a graph can be computed in $O(m)$ \cite{chen2016node})


Indeed, it turns out that solving Problem~\ref{problem:1} optimally is NP-Hard. A straight forward reduction from \textit{Minimum Vertex Cover Problem} follows as,
If there exists a set $S$ with $|S|=k$ such that $\lambda_{max}(A^{[S]})=0$, then $S$ is a vertex cover of the graph. It follows from the following implication of famous \textit{Perron-Frobenius theorem}

\begin{fact}\cite{frobenius}
	Deleting any edge from a simple connected graph $G$ strictly decreases the largest eigenvalue of the corresponding adjacency matrix.
\end{fact}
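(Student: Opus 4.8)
The plan is to obtain this as a direct consequence of the Perron--Frobenius theorem applied to the adjacency matrix, combined with the variational (Rayleigh quotient) characterization of the largest eigenvalue. First I would fix an edge $e=(u,v)$ and write the adjacency matrix of the graph $G'$ obtained by deleting $e$ as $A' = A - E_{uv}$, where $E_{uv}$ is the symmetric $0/1$ matrix with ones exactly in positions $(u,v)$ and $(v,u)$ and zeros elsewhere. Since $G$ is simple and connected, $A$ is a symmetric, entrywise nonnegative, irreducible matrix, so Perron--Frobenius guarantees that $\lambda_{max}(A)$ is a \emph{simple} eigenvalue whose eigenvector $x$ can be chosen with strictly positive entries.

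Next I would use the fact that for a symmetric matrix $M$ one has $\lambda_{max}(M) = \max_{\|z\|=1} z^{\top} M z$, with the maximum attained exactly at the unit eigenvectors for $\lambda_{max}(M)$. Let $y$ be a unit eigenvector of $A'$ for $\lambda_{max}(A')$; because $A'$ is symmetric and nonnegative, the nonnegative-matrix form of Perron--Frobenius lets me take $y$ entrywise nonnegative (this weaker form is needed because deleting $e$ may disconnect the graph, so $A'$ need not be irreducible). Then
\[
\lambda_{max}(A') = y^{\top} A' y = y^{\top} A y - 2\,y_u y_v \le y^{\top} A y \le \lambda_{max}(A),
\]
where the first inequality uses $y_u,y_v \ge 0$ and the second is the variational bound. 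This already yields the non-strict inequality $\lambda_{max}(A') \le \lambda_{max}(A)$.

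The main obstacle is upgrading this to a \emph{strict} inequality, and this is exactly where the simplicity and strict positivity supplied by Perron--Frobenius enter. I would argue by contradiction: if $\lambda_{max}(A') = \lambda_{max}(A)$, then both inequalities in the display must be equalities. Equality in the second forces $y$ to be a unit eigenvector of $A$ for $\lambda_{max}(A)$, hence by simplicity a scalar multiple of the Perron vector $x$, so that $y_u,y_v>0$; but equality in the first forces $y_u y_v = 0$, a contradiction. Therefore $\lambda_{max}(A') < \lambda_{max}(A)$, which is the claim. The only remaining points requiring care are the justification that $\lambda_{max}$ coincides with the spectral radius (so that the Perron vector is the relevant extremal eigenvector) and the nonnegativity of $y$ in the possibly-disconnected case; both follow from the standard nonnegative-matrix version of Perron--Frobenius, so I expect no serious difficulty there beyond stating them carefully.
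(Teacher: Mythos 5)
Your argument is correct. The paper itself offers no proof of this fact---it is stated as a known consequence of the Perron--Frobenius theorem with only a citation---so there is nothing to compare against beyond noting that your route is the standard one such a citation points to. The two places that need care are exactly the ones you flag: (i) the extremal vector $y$ for $A'$ can be taken entrywise nonnegative even when $G-e$ is disconnected (either by applying the nonnegative-matrix form of Perron--Frobenius, or by replacing $y$ with its entrywise absolute value and observing that the Rayleigh quotient does not decrease for a nonnegative symmetric matrix), and (ii) equality in $y^{\top}Ay \le \lambda_{max}(A)$ forces $y$ into the top eigenspace, which by irreducibility of $A$ is one-dimensional and spanned by a strictly positive vector, contradicting $y_u y_v = 0$. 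Both are handled correctly, so the proof is complete as proposed.
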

Also, If there is a vertex cover $S$ of the graph such that $|S|=k$ then deleting $S$ will result in an empty graph which has eigenvalue zero.

Although Problem 1 is NP-Hard, the following greedy algorithm guarantees a $(1-1/e)$-approximation to the optimal solution to Problem~\ref{problem:1}. 
\begin{algorithm}
	\caption{: GREEDY-1($G$,$k$)}
	\label{algo:greedy_eigen}
	\begin{algorithmic}
		\State $S \gets \emptyset$
		\While{$|S|<k$}
		\State $v \gets \underset{x\in V\setminus S}{\arg\max}$ $(\lambda_1(A_{-\{S\cup \{x\} \}}))$
		\State $S \gets S\cup \{v\}$
		\EndWhile
		
		\State \Return $S$ 
	\end{algorithmic}
\end{algorithm}
Approximation guarantee of GREEDY-1 follows from Theorem \ref{NemhauserGreedy}.

\begin{theorem}\cite{Nemhauser}\label{NemhauserGreedy} Let $f$ be a non-negative, monotone and submodular  function, $f: 2^{\Omega} \rightarrow \mathbb{R}$. Suppose ${\cal A}$ is an algorithm, that choose a $k$ elements set $S$ by adding an element $u$ at each step such that  $u= \underset{x\in \Omega\setminus S}{\arg\max}$ $f(S\cup\{x\})$. Then ${\cal A}$ is a $(1-1/e)$-approximate algorithm. 
\end{theorem}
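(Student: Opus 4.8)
The plan is to run the classical argument for greedy maximization of monotone submodular functions: establish a per-step progress guarantee and then unroll the resulting geometric recurrence. Let $S^*$ denote an optimal $k$-element set, write $\mathrm{OPT} = f(S^*)$, and let $S_0 = \emptyset \subseteq S_1 \subseteq \dots \subseteq S_k$ be the nested sets produced by $\mathcal{A}$, where $S_{i+1} = S_i \cup \{u_{i+1}\}$ and $u_{i+1}$ maximizes $f(S_i \cup \{x\})$ over $x \in \Omega \setminus S_i$. Note first that since $f(S_i)$ is constant within a single step, maximizing $f(S_i \cup \{x\})$ is equivalent to maximizing the marginal gain $f(S_i \cup \{x\}) - f(S_i)$; this is the form in which the greedy rule will be used. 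The target inequality is $f(S_k) \ge (1 - 1/e)\,\mathrm{OPT}$.

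The heart of the proof is a single lemma stating that at every step the greedy marginal gain recovers at least a $1/k$ fraction of the remaining gap to the optimum:
\[
 f(S_{i+1}) - f(S_i) \;\ge\; \frac{1}{k}\bigl(\mathrm{OPT} - f(S_i)\bigr).
\]
To prove this I would first invoke monotonicity to write $\mathrm{OPT} = f(S^*) \le f(S^* \cup S_i)$, and then expand the quantity $f(S^* \cup S_i) - f(S_i)$ as a telescoping sum of marginal gains obtained by inserting the elements of $S^* \setminus S_i$ one at a time into $S_i$. Submodularity bounds each term of this telescoping sum by the marginal gain of inserting that single element directly into $S_i$, and the greedy maximality of $u_{i+1}$ bounds every such single-element gain by $f(S_{i+1}) - f(S_i)$. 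Since $|S^* \setminus S_i| \le k$, summing the at most $k$ terms yields $\mathrm{OPT} - f(S_i) \le k\bigl(f(S_{i+1}) - f(S_i)\bigr)$, which is the lemma. This is the step where monotonicity, submodularity, and the greedy rule are all used together, and I expect it to be the main obstacle; the rest is bookkeeping.

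Given the lemma, the conclusion follows by unrolling a recurrence. Setting $\delta_i = \mathrm{OPT} - f(S_i)$, the lemma reads $\delta_{i+1} \le (1 - 1/k)\,\delta_i$, so by induction $\delta_k \le (1 - 1/k)^k \delta_0$. The elementary bound $1 - x \le e^{-x}$ gives $(1-1/k)^k \le e^{-1}$, and nonnegativity together with $S_0 = \emptyset$ gives $\delta_0 = \mathrm{OPT} - f(\emptyset) \le \mathrm{OPT}$. Combining these, $\mathrm{OPT} - f(S_k) = \delta_k \le \mathrm{OPT}/e$, which rearranges to $f(S_k) \ge (1 - 1/e)\,\mathrm{OPT}$, establishing that $\mathcal{A}$ is a $(1 - 1/e)$-approximation.
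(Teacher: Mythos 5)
Your argument is the standard Nemhauser--Wolsey--Fisher proof and it is correct: the per-step lemma, the telescoping over $S^*\setminus S_i$ bounded via submodularity and the greedy choice, and the geometric recurrence $\delta_{i+1}\le(1-1/k)\delta_i$ all go through (the only implicit point is that bounding a sum of $m\le k$ terms by $k$ times the greedy gain uses that the greedy gain is nonnegative, which monotonicity supplies). The paper itself gives no proof of this theorem --- it is stated as a cited classical result --- so there is nothing to contrast your approach with; yours is the canonical argument from the original reference.
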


For sparse graphs largest eigenvalue can be computed in $O(m)$, running time of GREEDY-1 amounts to $O(k n m)$. This runtime is impractical for any reasonably large real world graph.

In \cite{chen2016node}, a score, shield-value, was assigned to each subset $S$, which measures the approximated eigendrop achieved by removing the set $S$. They defined a monotone sub-modular function, using which they proposed a greedy algorithm with runtime $O(nk^2 + m)$. Afterwards in \cite{Ahmad2016}, same problem was solved using score based on number of closed walks of length $4$ and a greedy algorithm was presented.

\section*{Our Proposed Algorithm}
As stated in introduction section, we use largest eigenvalue as the measure of vulnerability. In this section we give our approximation algorithm to find the best subset for immunization. We first review some basic facts from linear algebra and graph theory to justify our approach. Let $A$ be an $n\times n$ matrix; the following results from linear algebra [c.f. \cite{strang1988linear}, and \cite{West2001}] relate the eigen spectrum and the trace of $A$.

\begin{fact}\label{traceEigen} 
	$$trace(A)=\sum_{i=1}^n A(i,i)=\sum_{i=1}^{n} \lambda_i(A)$$
\end{fact}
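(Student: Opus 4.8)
The plan is to note that the first equality $trace(A)=\sum_{i=1}^n A(i,i)$ is merely the definition of the trace, so the whole content of the statement lies in the second equality: the sum of the diagonal entries equals the sum of the eigenvalues. I would establish this by writing down two different expansions of the characteristic polynomial and comparing a single coefficient via Vieta's formulas.

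First I would set $p(t)=\det(tI-A)$, a monic polynomial of degree $n$ whose roots, counted with multiplicity, are exactly the eigenvalues $\lambda_1(A),\dots,\lambda_n(A)$. Factoring over its roots gives
$$p(t)=\prod_{i=1}^{n}\bigl(t-\lambda_i(A)\bigr),$$
and expanding this product shows that the coefficient of $t^{n-1}$ equals $-\sum_{i=1}^{n}\lambda_i(A)$. On the other hand, I would expand $p(t)=\det(tI-A)$ directly through the Leibniz (permutation) formula and track which permutations can contribute a power of $t$ as high as $t^{n-1}$. The key point is that a diagonal factor $(tI-A)_{i,i}=t-A(i,i)$ carries the variable $t$ while an off-diagonal factor $-A(i,\sigma(i))$ does not, so the degree in $t$ of a permutation's term equals its number of fixed points. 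Hence only the identity permutation contributes to $t^{n}$ and $t^{n-1}$, its term being $\prod_{i=1}^{n}\bigl(t-A(i,i)\bigr)$, whose expansion has coefficient $-\sum_{i=1}^{n}A(i,i)$ for $t^{n-1}$. Equating the two computed coefficients yields $\sum_i A(i,i)=\sum_i\lambda_i(A)$.

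The main obstacle is the combinatorial bookkeeping in the direct expansion: justifying rigorously that no non-identity permutation reaches degree $t^{n-1}$. This step is easy to hand-wave but needs the observation that a non-identity permutation must displace at least two indices (a permutation with $n-1$ fixed points is forced to fix the last index as well), so it omits at least two diagonal factors and therefore has degree at most $n-2$ in $t$.

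As an alternative that is especially clean in our setting, where $A$ is the symmetric adjacency matrix of a graph and hence orthogonally diagonalizable by the spectral theorem, I could bypass the determinant expansion entirely. Writing $A=Q\Lambda Q^{\top}$ with $\Lambda=\mathrm{diag}\bigl(\lambda_1(A),\dots,\lambda_n(A)\bigr)$ and using the cyclic invariance of the trace gives $trace(A)=trace(Q\Lambda Q^{\top})=trace(\Lambda Q^{\top}Q)=trace(\Lambda)=\sum_{i=1}^{n}\lambda_i(A)$. This argument is shorter, but it relies on diagonalizability rather than treating a fully general $n\times n$ matrix, so I would present the characteristic-polynomial route as the primary proof.
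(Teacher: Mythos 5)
Your proof is correct. Note, however, that the paper does not prove this statement at all: it is stated as a \emph{Fact} and attributed to standard linear-algebra references, so there is no in-paper argument to compare against. Both of your routes are sound. The characteristic-polynomial argument is complete --- in particular you correctly isolate the one subtle point, namely that a permutation with $n-1$ fixed points must be the identity, so no non-identity permutation contributes to the $t^{n-1}$ coefficient --- and it establishes the identity for an arbitrary $n\times n$ matrix over $\mathbb{C}$, with eigenvalues counted by algebraic multiplicity. The spectral-theorem alternative is shorter and entirely adequate for this paper, since $A$ is always a real symmetric adjacency matrix here; your choice to foreground the more general argument is reasonable but not necessary in context.
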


\begin{fact}\label{traceEigenPower} 
	$$trace(A^p)=\sum_{i=1}^{n}\lambda(A^p)=\sum_{i=1}^{n}(\lambda_i(A))^p$$
\end{fact}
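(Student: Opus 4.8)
The plan is to reduce Fact \ref{traceEigenPower} to the already-stated Fact \ref{traceEigen} by showing that the eigenvalues of $A^p$ are precisely the $p$-th powers of the eigenvalues of $A$. Since the matrices of interest here are adjacency matrices of undirected graphs, $A$ is real and symmetric, and therefore orthogonally diagonalizable by the spectral theorem. I would begin by writing $A = Q\Lambda Q^{T}$, where $Q$ is orthogonal ($Q^{T}Q = QQ^{T} = I$) and $\Lambda = \mathrm{diag}(\lambda_1(A),\dots,\lambda_n(A))$ collects the eigenvalues on its diagonal.

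Next I would compute $A^p$ by telescoping the factorization. Because the inner factors cancel,
\begin{equation}
A^p = \underbrace{(Q\Lambda Q^{T})(Q\Lambda Q^{T})\cdots(Q\Lambda Q^{T})}_{p\text{ times}} = Q\,\Lambda^{p}\,Q^{T},
\end{equation}
using $Q^{T}Q = I$ repeatedly. Since $\Lambda$ is diagonal, $\Lambda^{p} = \mathrm{diag}\big((\lambda_1(A))^{p},\dots,(\lambda_n(A))^{p}\big)$, and the identity $A^p = Q\Lambda^{p}Q^{T}$ exhibits $A^p$ as orthogonally similar to the diagonal matrix $\Lambda^p$. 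Consequently the eigenvalues of $A^p$ are exactly $\lambda_i(A^p) = (\lambda_i(A))^{p}$ for $i = 1,\dots,n$.

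Finally I would apply Fact \ref{traceEigen} to the matrix $A^p$: its trace equals the sum of its eigenvalues, so
\begin{equation}
trace(A^p) = \sum_{i=1}^{n}\lambda_i(A^p) = \sum_{i=1}^{n}(\lambda_i(A))^{p},
\end{equation}
which is the claimed identity. Alternatively, one can avoid naming the eigenvalues of $A^p$ and instead invoke the cyclic invariance of the trace, $trace(Q\Lambda^{p}Q^{T}) = trace(\Lambda^{p}Q^{T}Q) = trace(\Lambda^{p}) = \sum_i (\lambda_i(A))^{p}$; both routes rely on the same diagonalization. I do not expect any real obstacle here, as this is a standard consequence of the spectral theorem; the only point requiring mild care is justifying that diagonalization is available, which is guaranteed precisely because $A$ is symmetric. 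If one wished to state the fact for arbitrary (non-symmetric) matrices, the clean argument would instead go through Schur triangularization or the Jordan form to handle eigenvalue multiplicities, but for the symmetric adjacency matrices at hand the orthogonal diagonalization above suffices.
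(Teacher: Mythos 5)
The paper states this as a cited textbook fact and gives no proof of its own, so there is nothing to contrast your argument against; your proof via the spectral theorem --- orthogonal diagonalization $A = Q\Lambda Q^{T}$, telescoping to $A^p = Q\Lambda^p Q^{T}$, and then applying Fact~\ref{traceEigen} (or cyclic invariance of the trace) --- is correct and is the standard justification. Your closing remark that symmetry of the adjacency matrix is what licenses the diagonalization, and that a general matrix would require Schur or Jordan form instead, is also accurate.
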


From the theory of vector norms \cite{strang1988linear} and fact \ref{traceEigenPower} we know that 
\begin{align*}
&\lim\limits_{\substack{p\rightarrow \infty \\ p \text{ even}}} \left( trace(A^p)\right)^{1/p}  = \lim\limits_{\substack{p\rightarrow \infty \\ p \text{ even}}} \left( \sum\limits_{i=1}^n \lambda_i(A)^p\right)^{1/p} \\
&= \lim\limits_{\substack{p\rightarrow \infty }} \left( \sum\limits_{i=1}^n |\lambda_i(A)|^p\right)^{1/p} = \max_{i}\{\lambda_i(A)\} = \lambda_{max}(A) 
\end{align*}

Using the above relation we establish that for immunization problem, we want to find a set $S$ of vertices in graph $G$ which, when removed, minimizes $trace((A^{[S]})^p)$ where $A^{[S]}$ is the adjacency matrix of $G^{[S]}$. So goodness of a set $S\subset V(G)$ is defined as 
\beq\label{gp} g_p(S)=trace((A^{[S]}))^p \eeq that we want to minimize. Also define complement of the function $g_p(S)$ as
\beq\label{fp} f_p(S)=trace(A^p)-trace((A^{[S]})^p)  \eeq

Clearly minimizing $g_p(S)$ is equivalent to maximizing $f_p(S)$. Now we give combinatorial definition of the optimization functions defined above using the following fact from graph theory. For a set $X\subset V(G)$, and vertices $u,v\in V(G)$, $N_X(v)=N_{G_X}(v)$ is the set of neighbors of $v$ in $X$ and $d_X(v)=d_{G_X}(v)=|N_X(v)|$, called degree of $v$ in $X$. When $X=V(G)$, we refer to $d_G(v)$ as $d(v)$. Moreover $d_X(u,v)$ represents the size of common neighborhood of $u$ and $v$ in set $X$, i.e $d_X(u,v)=|N_X(u)\cap N_X(v)|$. For a vertex $v\in S\subset V(G)$, $\cCW_p(v,S)$ is the set of all closed walks of length $p$ in $G_S$ containing $v$ at least once and $cw_p(v,S)=|\cCW_p(v,S)|$. Similarly we define $\cCW_p(S,G)$ to be the set of closed walks of length $p$ containing vertices of $S$ and correspondingly $cw_p(S,G)$ is the cardinality of the set. For simplicity we write $cw_p(G,G)$ as $cw_p(G)$.

\begin{fact}\cite{West2001}
	Given a graph $G$ with adjacency matrix $A$, $$cw_p(G)=trace(A^p)$$
\end{fact}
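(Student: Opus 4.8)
The plan is to prove the standard combinatorial interpretation of powers of the adjacency matrix and then specialize to its diagonal. The central claim I would establish first is a lemma: for every $p\geq 1$ and all vertices $i,j$, the entry $A^p(i,j)$ equals the number of walks of length $p$ in $G$ from $i$ to $j$, where a walk of length $p$ is a sequence $w_0,w_1,\ldots,w_p$ of vertices with each consecutive pair joined by an edge (i.e. $A(w_t,w_{t+1})=1$).

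I would prove this lemma by induction on $p$. For the base case $p=1$, the entry $A(i,j)$ equals $1$ exactly when $ij\in E$ and $0$ otherwise, which is precisely the number of length-$1$ walks from $i$ to $j$. For the inductive step I would write $A^{p+1}(i,j)=\sum_{k=1}^n A^p(i,k)\cdot A(k,j)$ by the definition of matrix multiplication. By the inductive hypothesis $A^p(i,k)$ counts the walks of length $p$ from $i$ to $k$, while $A(k,j)$ indicates whether $kj$ is an edge, so each summand counts exactly those length-$(p+1)$ walks from $i$ to $j$ whose penultimate vertex is $k$. Since every length-$(p+1)$ walk has a unique penultimate vertex, summing over $k$ enumerates every such walk exactly once, which establishes the claim.

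With the lemma in hand I would specialize to $i=j$: a closed walk of length $p$ rooted at vertex $i$ is exactly a walk of length $p$ from $i$ to $i$, so $A^p(i,i)$ counts the closed walks of length $p$ that begin (and end) at $i$. Summing over all vertices then gives
\[
cw_p(G)=\sum_{i=1}^n A^p(i,i)=trace(A^p),
\]
which is the desired identity, since the closed walks counted by $cw_p(G)=cw_p(V(G),G)$ are exactly those rooted at some vertex of $V(G)$.

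The only genuinely delicate point — which I would flag explicitly rather than treat as an analytic obstacle — is the bookkeeping convention for closed walks: the count $cw_p(G)$ must regard two closed walks as distinct when they differ as rooted sequences (equivalently, the starting vertex is distinguished), because this is precisely what the diagonal sum records. Once that convention is fixed, the result follows entirely from the matrix-product counting argument, and the only real work is the straightforward induction above.
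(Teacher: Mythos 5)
Your proof is correct: the induction showing $A^p(i,j)$ counts length-$p$ walks from $i$ to $j$, followed by restriction to the diagonal and summation, is the standard argument, and you rightly flag the one subtlety — that $cw_p(G)$ must count closed walks as rooted sequences (distinguished starting vertex) for the identity to hold, a convention the paper itself relies on when it later multiplies by the number of rotations in its $cw_6(v,G)$ computation. The paper gives no proof of this fact, citing it to a textbook, so there is nothing to compare beyond noting that your argument is exactly the one the cited reference would supply.
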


From the above fact and definition of trace, we get that 

$$cw_p(G)= cw_p(V\setminus S, G^{[S]})+ cw_p(S,G)$$

Note that this is same equation as \eqref{fp} and can be rewritten as $cw_p(G)=f_p(S)+g_p(S)$. This tells us we need to find set $S$ which maximizes $cw_p(S,G)$ (equivalently $f_p(S)$). Computing $cw_p(S,G)$ is expensive for large value of $p$, but in practice we observe that $p=6$ is sufficiently large.

\begin{theorem}
	Given a graph $G$ with adjacency matrix $A$
	\begin{align*}  cw_6(v,G)&= && 6\sum_{i=1}^n \sum_{j=1}^n A^2(v,v_i)A^2(v,v_j)A^2(v_i,v_j)-3\sum_{i=1}^n \sum_{j=1}^n A^2(v,v_i)A^2(v,v_j)A(v,v_i)A(v,v_j)\\
	& &&-6\sum_{i=1}^n A^2(v,v_i)^2A^2(v,v)+ 2A^2(v,v)^3
	\end{align*}
\end{theorem}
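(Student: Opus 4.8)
The plan is to read $cw_6(v,G)$ combinatorially as the number of based closed walks of length $6$ in $G$ that visit $v$ at least once, which is consistent with the convention $cw_p(G)=trace(A^p)$ (the trace counts each closed walk together with a choice of starting position). Writing a based closed walk as $W=(w_0,w_1,\dots,w_5,w_6=w_0)$, let $c_v(W)$ be the number of indices $i\in\{0,\dots,5\}$ with $w_i=v$. The first step is to recognize that the leading term is exactly $6\,A^6(v,v)$: since $A^2$ is symmetric, $\sum_{i,j}A^2(v,v_i)A^2(v,v_j)A^2(v_i,v_j)=(A^2)^3(v,v)=A^6(v,v)$, and a rotation-of-basepoint bijection shows that for each fixed position $i$ there are exactly $A^6(v,v)$ walks with $w_i=v$, so that $\sum_{W}c_v(W)=6\,A^6(v,v)$.

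Next I would convert the ``at least one visit'' count into the ``total visits'' count by inclusion--exclusion. Because consecutive walk vertices are distinct (no loops), the positions carrying $v$ form an independent set in the cycle $C_6$ on $\{0,\dots,5\}$, so $c_v(W)\le 3$. On $\{0,1,2,3\}$ one has $\mathbf{1}[c\ge 1]=c-\binom{c}{2}+\binom{c}{3}$, hence
\[
cw_6(v,G)=\sum_W c_v(W)-\sum_W\binom{c_v(W)}{2}+\sum_W\binom{c_v(W)}{3}=6A^6(v,v)-P+T,
\]
where $P$ counts incidences (walk, unordered pair of $v$-positions) and $T$ counts (walk, unordered triple of $v$-positions).

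To evaluate $P$ and $T$ I would use an arc decomposition: if $v$ is pinned at positions whose two cyclic arcs have lengths $a$ and $6-a$, the walk factors into two independent $v\to v$ walks of those lengths, giving $A^{a}(v,v)A^{6-a}(v,v)$ for a pinned pair. Enumerating the $\binom{6}{2}$ position pairs by cyclic distance ($6$ adjacent pairs contributing $A^1(v,v)A^5(v,v)=0$, $6$ distance-$2$ pairs, and $3$ distance-$3$ pairs) yields $P=6\,A^2(v,v)A^4(v,v)+3\,A^3(v,v)^2$; since the only independent triples are $\{0,2,4\}$ and $\{1,3,5\}$, each splitting into three length-$2$ arcs, $T=2\,A^2(v,v)^3$. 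Finally I would translate these into the stated double sums via the elementary identities $A^4(v,v)=\sum_i A^2(v,v_i)^2$ and $A^3(v,v)=\sum_i A^2(v,v_i)A(v,v_i)$, which turn $6A^2(v,v)A^4(v,v)$ into $6\sum_i A^2(v,v_i)^2A^2(v,v)$ and $3A^3(v,v)^2$ into $3\sum_{i,j}A^2(v,v_i)A^2(v,v_j)A(v,v_i)A(v,v_j)$; substituting into $6A^6(v,v)-P+T$ then gives the claim.

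The main obstacle is the bookkeeping in the middle: one must verify each walk is counted with precisely the right multiplicity, which rests on (i) the bound $c_v\le 3$ forced by the independent-set structure of visit-positions in $C_6$, (ii) the exact inclusion--exclusion identity on $\{0,1,2,3\}$, and (iii) the factorization of a pinned walk count into a product of arc counts $A^{a}(v,v)$, whose independence is what makes the enumeration tractable. The final rewriting into double sums is then a routine application of the $A^p(v,v)=\sum_i\cdots$ expansions.
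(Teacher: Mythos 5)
Your proof is correct, and it takes a genuinely different route from the paper's. The paper argues by direct case analysis on the exact number of occurrences of $v$ in the walk: it separately counts based closed walks containing $v$ exactly once (as $6\sum_{b\neq v}\sum_{d\neq v} A^2(v,b)A^2(v,d)[A^2(b,d)-A(v,b)A(v,d)]$, using the arc decomposition with an explicit correction to forbid $v$ in the middle arc), exactly twice (splitting into the distance-$2$ and distance-$3$ position patterns, with rotation multiplicities $6$ and $3$), and exactly thrice (the two patterns $\{0,2,4\}$ and $\{1,3,5\}$, giving $2d(v)^3$), and then sums. You instead compute the moment-type sums $\sum_W c_v(W)=6A^6(v,v)$, $\sum_W\binom{c_v(W)}{2}=6A^2(v,v)A^4(v,v)+3A^3(v,v)^2$, and $\sum_W\binom{c_v(W)}{3}=2A^2(v,v)^3$, and combine them via the identity $\mathbf{1}[c\ge1]=c-\binom{c}{2}+\binom{c}{3}$ valid for $c\le 3$ (which you correctly justify by the independent-set bound in $C_6$). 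Both proofs rest on the same arc-factorization $A^{a}(v,v)A^{6-a}(v,v)$ for walks pinned at two $v$-positions, but your organization has two advantages: the pinned arcs never need the ``avoid $v$ in the interior'' correction that the paper must handle by hand, and your three intermediate quantities are term-for-term the four summands of the stated formula, so no final algebraic recombination is needed (the paper's proof actually stops at its case-count expression and leaves the verification that it equals the displayed formula implicit). Your scheme also generalizes more transparently to $cw_p$ for larger even $p$. The paper's approach is more elementary in that it avoids inclusion--exclusion, at the cost of heavier bookkeeping over rotation classes.
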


\begin{proof}
	A typical closed walk $W$ of length $6$ can be represented as $(a,b,c,d,e,f,a)$. Note that $v$ can appear in a closed walk of length $6$ at most thrice.

	First we count the walks that contain $v$ exactly once. Lets assume that $v$ appears at the first position, i.e. $W=(v,a,b,c,d,e,v)$. Now since $v$ can not appear at any other position, we get that $b,c,d\neq v$. Also $(v,a,b)$, $(b,c,d)$ and $(d,e,v)$ are paths of length $2$ and number of such walks can be $d(v,b)$, $d(b,d)$ and $d(d,v)$ respectively. But $d(a,d)$ may include $c=v$ case. So to exclude this we subtract $A(b,v)A(d,v)$ from $d(b,d)$ (this will be $1$ only if $v$ is neighbor of both $b$ and $d$). We get that number of closed walks of length $6$ containing $v$ only at first position is $\sum_{b\neq v}\sum_{d\neq v} d(v,b)d(v,d)[d(b,d)-A(v,b)A(v,d)]$. Each one position rotation of this walk results in distinct walk of the kind, so we get that walks containing $v$ exactly once are $6\sum_{b\neq v}\sum_{d\neq v} d(v,b)d(v,d)[d(b,d)-A(v,b)A(v,d)]$.
	
	Now we count the walks containing $v$ twice. One way to get such walk is $v$ is in first and third positions i.e. $W=(v,a,v,b,c,d,v)$. Number of such walks is $\sum_{c\neq v} d(v,c)^2d(v)$. Again each rotation gives unique walk, so we have $6\sum_{c\neq v} d(v,c)^2d(v)$ such walks. Another way to have a walk with $v$ appearing twice is $W=(v,a,b,v,c,d,v)$. There are $\sum_{b\in N(v)}\sum_{d\in N(v)} d(v,b)d(v,d)$ walks with $v$ at first and fourth position which is same as\\ $\sum_{b\in V}\sum_{d\in V} d(v,b)A(v,b)d(v,d)A(v,d)$. Note that only two clockwise rotations result in new walks. This gives that total number of closed walks of length $6$ containing $v$ twice is $$6\sum_{c\neq v} d(v,c)^2d(v)+3\sum_{b\in N(v)}\sum_{d\in N(v)} d(v,b)d(v,d)$$
	
	If we consider walks which contain $v$ thrice, then there are two possibilities for such walks, one which start at $v$ $i)$ $(v,a,v,b,v,c,v)$ and $ii)$ $(a,v,b,v,c,v,a)$. Count for either of them is $d(v)^3$. This gives the total count of these walks as $2d(v)^3$.
	
	So number of closed walks of length $6$ containing a vertex $v$ in graph $G$ is 
	$$6\sum_{b\neq v}\sum_{d\neq v} d(v,b)d(v,d)[d(b,d)-A(v,b)A(v,d)] + 6\sum_{c\neq v} d(v,c)^2d(v)+3\sum_{b\in N(v)}\sum_{d\in N(v)} d(v,b)d(v,d) + 2d(v)^3$$

\end{proof}
%

Clearly computing this number for any vertex $v$ takes $O(n^2+ c(n))$ time where $c(n)$ is the time taken for computing $A^2$. So instead we approximate the number of closed walks of length $6$ containing $v$.
\subsection*{Approximating number of walks}
An equivalent expression for $cw_6(v,G)$ is $$ 6A^6(v,v) - 6A^4(v,v)A^2(v,v)-3(A^3(v,v))^2 +2(A^2(v,v))^3.$$ The formula for $cw(v,G)$ suggests that we need to approximate the powers of adjacency matrix $A$ of $G$. For the purpose, we consider a summary graph $H$ of $G$ which is weighted undirected graph with adjacency matrix $A(H)=C$. We generate matrix $C$ (graph $H$) in the following way:  

First we partition the vertex set $V(G)$ into random subsets using a random \textit{hash function $h$}. Let the partition of vertices under the hash function $h$ be ${\cal P}(h)$ with $|{\cal P}(h)|=\alpha$. Second we construct matrix $C$ as
\begin{algorithm}[H]
	\caption{: SummaryGraph($A(G)$,$\alpha$,$h$)}
	\label{algo:summarygraph}
	\begin{algorithmic}
		\State $C \gets \Call{zeros}{p\times \alpha}$	
		\For{$i=1$ to $n$}
		\For{$j=i$ to $n$}
		
		\If {A(i,j)=1}
		\State $C[h(i)][h(j)] \gets C[h(i)][h(j)]+1$
		\State $C[h(j)][h(i)] \gets C[h(i)][h(j)]$
		\EndIf
		\EndFor
		\EndFor		
		\State \Return $C$ 
\end{algorithmic}\end{algorithm}
This matrix $C$ corresponds to summary graph $H$ in which every node (super-node) represents a set of vertices in ${\cal P}(h)$ and $C(i,j)$ entry denotes the number of edges from super-node $i$ to super-node $j$ (number of edges from vertices in super-node $i$ to vertices in super-node $j$). Lets denote $i$th super-node of $H$ by $\cX_i$

In order to approximate $cw_6(v,G)$ of a vertex $v\in V(G)$, we use powers of matrix $C$ instead those of $A(G)$. We keep $C^2$ and $C^3$ matrices. For each $\cX_i\in V(H)$, we compute terms $C^6(i,i)$ using formula  $\sum_{j=1}^{\alpha} (C^3(i,j))^2$ and $C^4(i,i)$ by $\sum_{j=i}^{\alpha} (C^2(i,j))^2$. 

Note that $C^p(i,j)$ represents the total number of walks of length $p$ from vertices in $\cX_i$ to vertices in $\cX_j$. So we can find the number of closed walks of length $p$ containing a specific vertex $v\in V(G)$, by estimating the contribution of this vertex in the total number of walks in $\cX_{h(v)}$. For this purpose, we define the contribution factor of $v$ with $h(v)=i$ in $C^p(i,i)$ as 

$$\frac{d_{\cX_i}(v)^p}{\sum_{u\in \cX_{i}}d_{\cX_i}(u)^p}$$
This can be seen clear if we expand the terms $C^p(i,i)$. For instance if we expand $C^3(i,i)$, we get one term as $(C(i,i))^3$ which is same as $\left( \sum_{v\in \cX_i} d_{\cX_i}(v)\right)^3 $.
We define $\sum_{u\in \cX_{i}}d_{\cX_i}(u)^p$ to be $D_p(i)$. So we get that the estimated value of $cw_6(v,G)$ is following when $h(v)=i$
$$cw'(v)=6C^6(i,i)\frac{d_G(v)^6}{D_6(i)} - 6d_G(v)C^4(i,i)\frac{d_G(v)^4}{D_4(i)} - 3\left(C^3(i,i)\frac{d_G(v)^3}{D_3(i)} \right)^2 + 2\left( d_G(v)\right)^3 $$

Since we partitioned $V(G)$ using random hash functions, we use multiple hash functions to normalize the effect of randomness as given follow.
\begin{algorithm}[H]
	\caption{: EstimateWalks($A(G)$,$\alpha$,$\beta$)}
	\label{algo:estimatewalks}
	\begin{algorithmic}
		\For {$i=1$ to $\beta$}
			\State $cw_i'\gets \Call{zeros}{n}$
			\State $C_i \gets \Call{SummaryGraph}{A(G),\alpha,h_i}$
			\For{$j=1$ to $n$}
			\State Compute $cw'_i[v_j]$
			\EndFor
		\EndFor
		\State $cwMin\gets \Call{zeros}{n}$
		\For{$j=1$ to $n$}
		\State  $cwMin[v]\gets \min_i cw'_i[v_j]$
		\EndFor
		\State \Return $cwMin$ 
\end{algorithmic}
\end{algorithm}
Once we have estimated the walks for each vertex $v$ of $V(G)$ using multiple hash functions, call it ${W(v)}(v)$, we can select set $S$ for immunization that contain vertices with most number of walks. But for efficient results we would prefer to choose $S$ that have vertices which are well spread apart and we do not want to select a lot of those vertices which are connected to each other. In order to deal with this, we define the score of each candidate subset $S$, on basis of which we select $S$ for immunization. For $v\in V(G)$, and $S\subset V(G)$,

\beq score(S)= \gamma \sum_{v\in S} W(v)^2-\sum_{u,v\in S} W(v)A(u,v) W(u) \eeq

where $\gamma$ is a positive integer. We want to find set $S$ such that $$S= \arg\max score(S), |S|=k.$$ But this optimization problem is clearly computationally intractable since it requires computing score for each of ${n\choose k}$ sets. So we show that the function $score(S)$ is monotonically non-decreasing and sub-modular, allowing us to devise a greedy strategy to construct set $S$ with guaranteed good approximation of our results.

First we show that function $score(S)$ is monotonically non-decreasing function. Let $E,F\subset V(G)$ and $x\in V(G)$ with $F=E\cup \{x\}$. Consider
\begin{align*}
score(F)-score(E)&= \gamma \sum_{v\in F} W(v)^2-\sum_{u,v\in F} W(v)A(u,v) W(u) -\gamma \sum_{v\in E} W(v)^2 +\sum_{u,v\in E} W(v)A(u,v) W(u) \\
&= \gamma W(x)^2-\sum_{v\in E} W(v)A(x,v) W(x)\\
&= W(x)\left[\gamma W(x)-\sum_{v\in E} W(v)A(u,v) \right]\\
&\geq 0 
\end{align*}
Now since $\gamma$ is any positive integer, if we keep $\gamma\geq k \max_{v\in V(G)} \{W(v)\}$, last inequality is satisfied and hence $score$ function is monotonically non-decreasing. Let $I,J,K\subset V(G)$ with $I\subset J$.

Now we prove the sub-modularity of this function. 
\begin{align*}
& { }(score(I\cup K)-score(I))-(score(J\cup K)-score(J))\\
&= \left(\gamma \sum_{v\in I\cup K} W(v)^2-\sum_{u,v\in I\cup K} W(v)A(u,v) W(u)-\gamma \sum_{v\in I} W(v)^2+\sum_{u,v\in I} W(v)A(u,v) W(u)   \right) \\
&-\left(\gamma \sum_{v\in J\cup K} W(v)^2-\sum_{u,v\in J\cup K} W(v)A(u,v) W(u)-\gamma \sum_{v\in J} W(v)^2+\sum_{u,v\in J} W(v)A(u,v) W(u)   \right) \\
&=\left( \gamma \sum_{v\in K} W(v)^2-\sum_{u,v\in K} W(v)A(u,v) W(u) -2\sum_{u\in K, v\in I} W(v)A(u,v) W(u)\right) \\
&-\left(\gamma \sum_{v\in K} W(v)^2-\sum_{u,v\in K} W(v)A(u,v) W(u) -2\sum_{u\in K, v\in J} W(v)A(u,v) W(u) \right) \\
&= 2\sum_{u\in K, v\in J} W(v)A(u,v) W(u) -2\sum_{u\in K, v\in I} W(v)A(u,v) W(u)= 2\sum_{u\in K, v\in J\setminus I} W(v)A(u,v) W(u)\geq 0
\end{align*}
Proving that our optimization function is sub-modular, and we can use Theorem \ref{NemhauserGreedy}, which guarantees that the greedy strategy will be $(1-1/e)$-approximate algorithm. We give the following greedy algorithm to construct the required set $S$.
\begin{algorithm}[H]
	\caption{: GreedyNodeImmunization($A(G)$,$k$,$\alpha$,$\beta$)}
	\label{algo:greedyalgo}
	\begin{algorithmic}[1]
		\State $S \gets \emptyset$
		\State $W_2, Score \gets \Call{zeros}{n} $
		\State $W \gets \Call{EstimateWalks}{A(G),\alpha,\beta}$
		\State $\gamma \gets \max_{i} W[i]$
		
		\For{$i=1$ to $n$}
			\State $W_2[i]\gets  \gamma W[i]^2$
		\EndFor
		
		\For{$i=1$ to $k$}
			\State $a_S \gets A[:, S]*W[S] $
			\For{$j=1$ to $n$}
				\If{$j\notin S$}
					\State $Score[j] \gets W_2[j]-2a_S[j]W[j] $
				\Else
					\State $Score[j] \gets -1$
				\EndIf
			\EndFor
			\State $maxNode \gets \arg \max_{j} Score[j] $
			\State $S \gets S\cup\{maxNode\}$
		\EndFor
		\State \Return $S$
\end{algorithmic}
\end{algorithm}
\subsubsection*{Analysis of Algorithm}
 Now we analyze our proposed algorithm and give its runtime complexity. First we discuss complexity of EstimateWalks function. This function needs to compute the following $\beta$ times (count of hash functions): $C$, $C^2, C^3, C^4, C^6$, $D_6(i)$ for all sets in partition formed by hash function, $cw'(v)$ for $n$ vertices.

Note that $C$ matrix for all hash functions can be computed with one scan of the whole graph, which takes $n^2$ time. Computing all the above, except $C$ takes at most $O(\alpha^3)$ time. For every hash function, it takes maximum $O(n+\alpha^3)$ time and finding min $cw'(i)$ for each vertex takes $\beta n$ time. This implies $EstimateWalks$ function takes $O(n^2+ \beta(n+\alpha^3))$ time.

Line $4$ and first for loop takes $O(n)$ steps. The $j$th iteration if loop in lines $9$ to $13$ takes $O(n+nj)$ and line $14$ is $O(n)$ work. This shows that second loop from line $7$ to $15$ takes $\sum_{j=1}^k O(n+nj)$ time which is $O(nk^2)$ in total.

So GreedyNodeImmunization($A(G)$, $k$, $\alpha$,$\beta$) algorithm takes total $O(n^2+ \beta(n+\alpha^3)+nk^2)$ time.

\section*{Experiments}
We present results of our suggested algorithm in detail in this section. We have compared results of our algorithm with those of NET-SHEILD\footnote{https://www.dropbox.com/s/aaq5ly4mcxhijmg/Netshieldplus.tar}, Brute Force Method and Walk $4$\cite{Ahmad2016} to evaluate the quality and efficiency. NET-SHEILD selects the vertices based on the eigen vector corresponding to largest eigenvalue of graph, Brute Force algorithm picks vertices which have maximum number of closed walks of length six passing across them and Walk $4$ chooses nodes based on  approximation of walks of length $4$ for immunization purpose. We have implemented the algorithm in Matlab and we have made our code available at the given link. 
\begin{table}[h!]
	\centering
	\begin{tabular}{ |p{2cm}|p{2cm}|p{2cm}| }
		
		\hline
		\textbf{Name} & \textbf{Nodes} \#  & \textbf{Edges} \# \\
		\hline
		Karate & 34 & 78 \\
		\hline
		Oregon & 10,670 & 22,002 \\
		\hline
		AA & 418,236 & 2,753,798\\
		\hline
		
	\end{tabular}
	\caption{Summary of Datasets}
	\label{tableOne}
\end{table}
We have used real world graphs for experimentation and all our graphs are undirected and unweighted. The first data set called Karate graph\footnote{http://konect.uni-koblenz.de/networks/ucidata-zachary} is a small graph of local karate club in which nodes represent members of the club and an edge between two nodes shows friendship among corresponding members. Karate graph consists of 34 nodes and 78 edges.\\
Second dataset is obtained from Oregon AS (Autonomous System)\footnote{http://snap.stanford.edu/data/oregon1.html} router graphs. We have constructed a communication graph in which nodes are participating routers and an edge between two routers represents direct peering relationship among them. A number of Oregon graphs are available and each graph is made from communication log of one week. We have selected a graph containing 10,670 nodes and 22,002 edges.\\
The third data set (AA) is from DBLP\footnote{http://dblp.uni-trier.de/xml/} dataset. In this graph a node represents an author and presence of an edge between two nodes shows that two authors have a co-authorship. In DBLP there is total node count of 418,236 and the number of edges among nodes is 2,753,798. We extracted smaller sub-graphs by selecting co-authorship graphs of individual journals (e.g Displays, International Journal of Computational Intelligence and Applications, International Journal of Internet and Enterprise Management, etc.). We ran our experiments on 20 different smaller co-authorship graphs of different journals. For the smaller sub graphs that we have extracted from DBLP dataset, node count goes up to few thousands and edge count goes up to few ten thousands. Details of sub graphs of DBLP data set is given in Table \ref{table2}. These subgraphs are also undirected and unweighted.
\begin{table}[h!]
	\centering
	\begin{tabular}{ |p{5.3cm}|p{1.3cm}|p{1.3cm}| }
		
		\hline
		\textbf{Name} & \textbf{Nodes}  & \textbf{Edges} \\
		\hline
		AI Communication & 1,203 & 2,204 \\
		\hline
		APJOR & 1,132 & 1,145 \\
		\hline
		Computer In Industry & 2,844 & 4,466\\
		\hline
		Computing And Informatics (CAI) & 1,598 & 2,324\\
		\hline
		Decision Support Systems (DSS) & 4,926 & 14,660\\
		\hline
		Display & 1,374 & 3,204\\
		\hline		
		Ecological Informatics & 1,990 & 4,913 \\
		\hline
		Engineering Application of AI & 4,164 & 6,733\\
		\hline
		IJCIA & 848 & 975\\
		\hline
	\end{tabular}
	\caption{Summary of DBLP subgraphs}
	\label{table2}
\end{table}

\begin{figure}[h!]
\centering
\begin{minipage}{.5\textwidth}

  \includegraphics[valign=c,width=9cm,height=9cm,keepaspectratio]{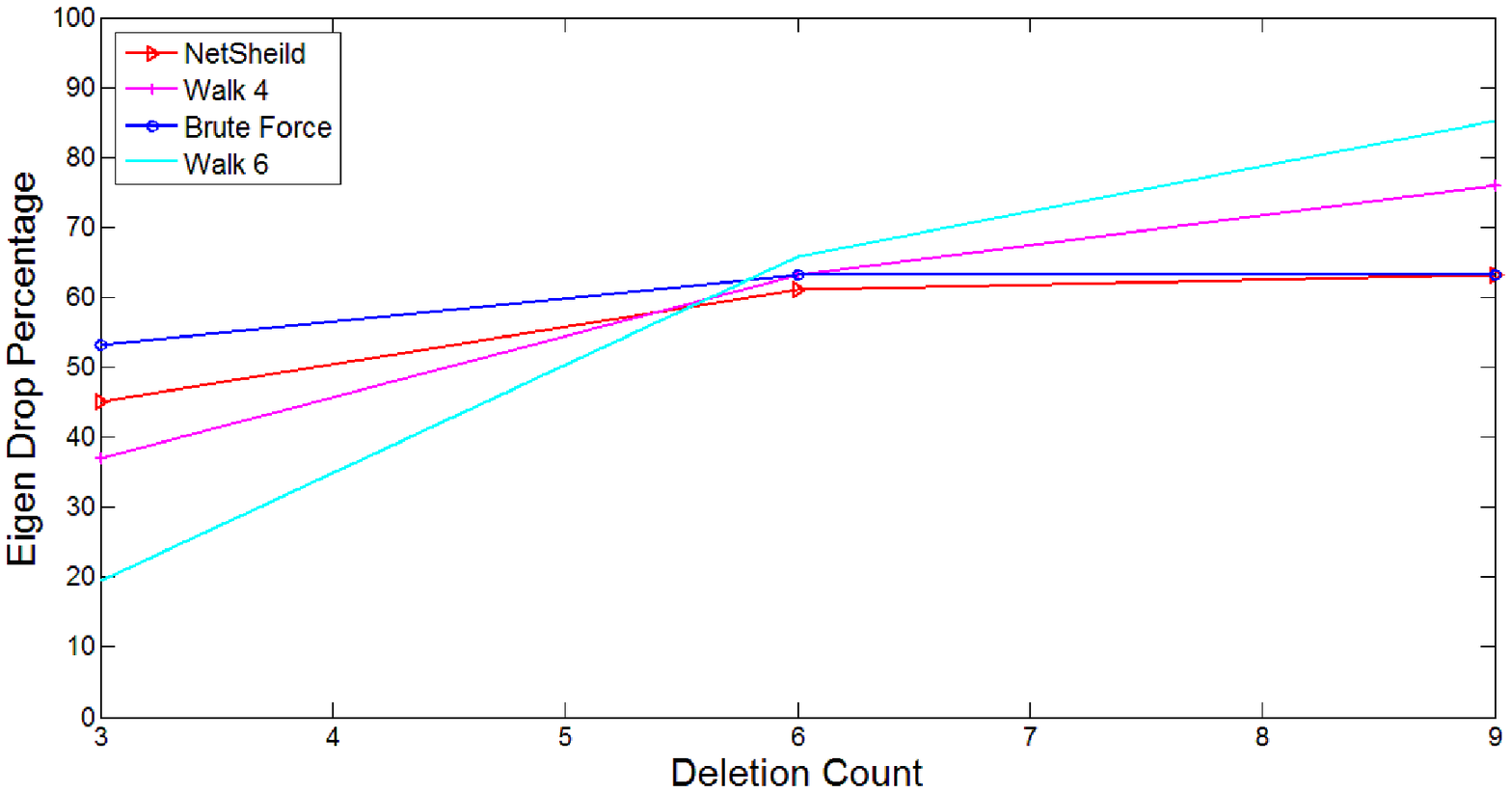}
  \caption{Eigendrop of Karate Graph}
  \label{fig:test1}
\end{minipage}%
\centering
\begin{minipage}{.5\textwidth}
  \includegraphics[valign=c,width=9cm,height=9cm,keepaspectratio]{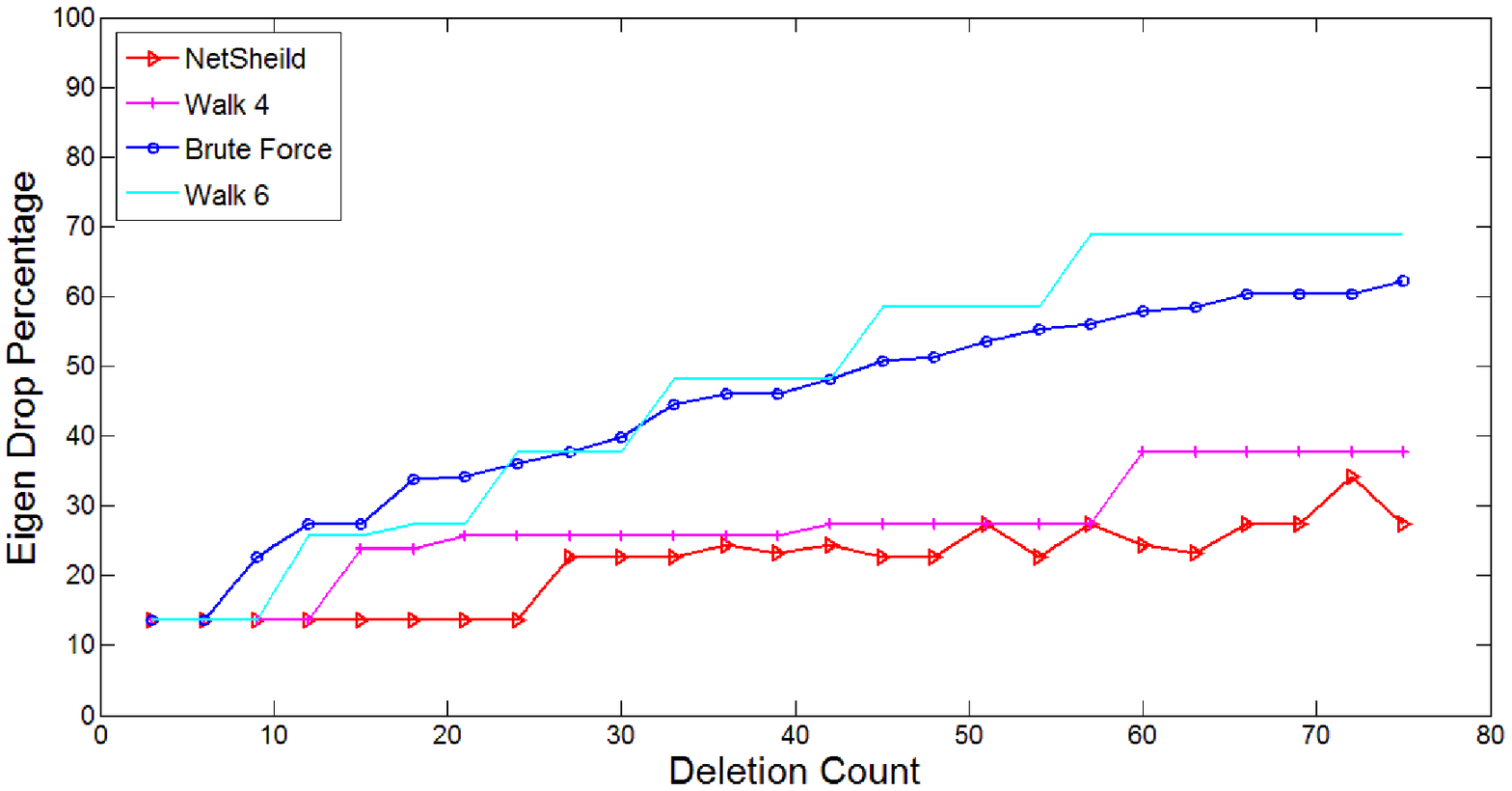}
	\caption{Eigendrop of Computer In Industry Graph}
  \label{fig:test2}
\end{minipage}
\end{figure}

\begin{figure}[h!]
\begin{minipage}{.5\textwidth}
  \centering
  \includegraphics[width=9cm,height=9cm,keepaspectratio]{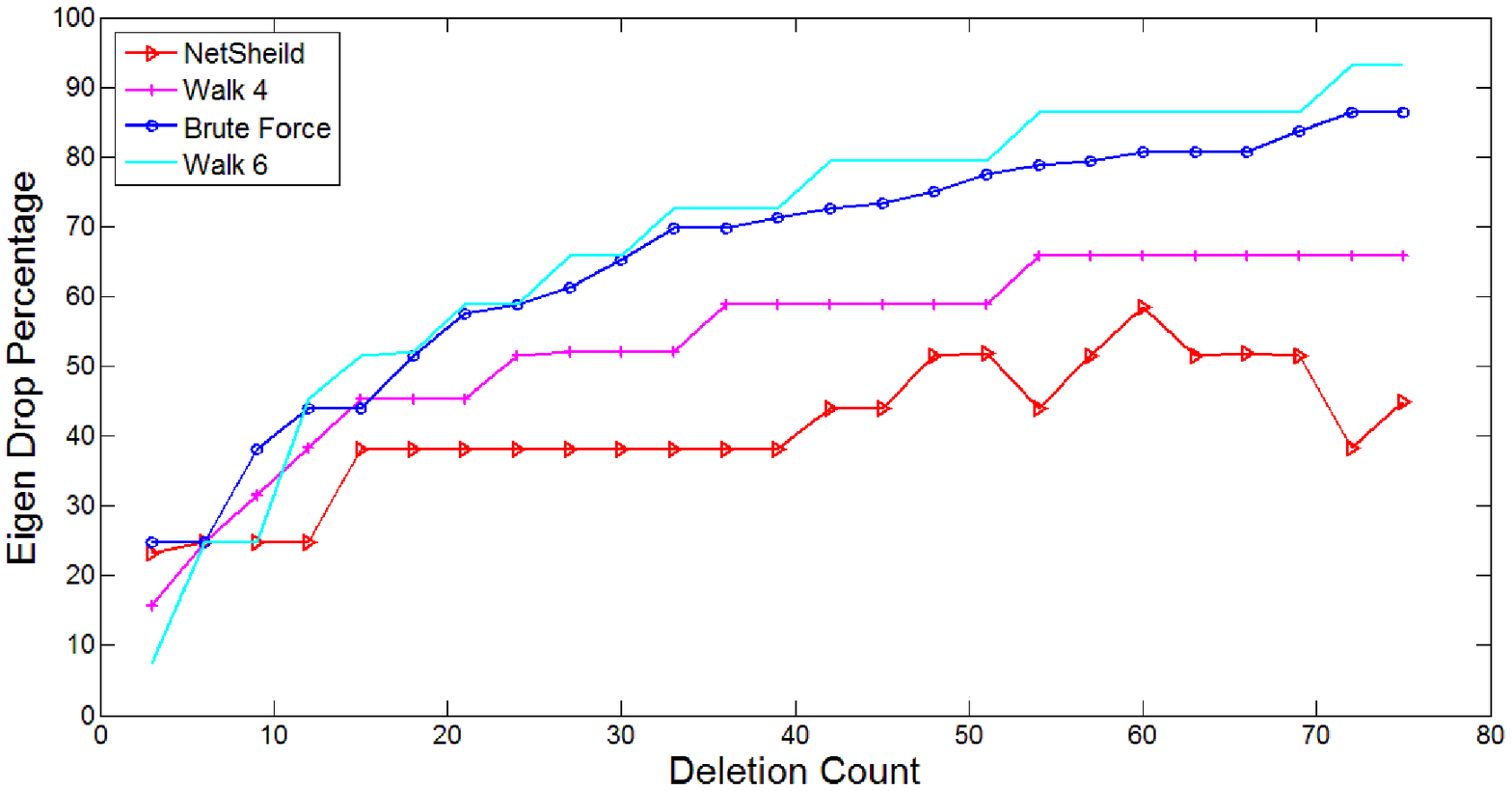}
	\caption{Eigendrop of CAI Graph}
  \label{fig:test3}
\end{minipage}%
\hspace{0.05in}
\begin{minipage}{.5\textwidth}
  \centering
  \includegraphics[width=9cm,height=9cm,keepaspectratio]{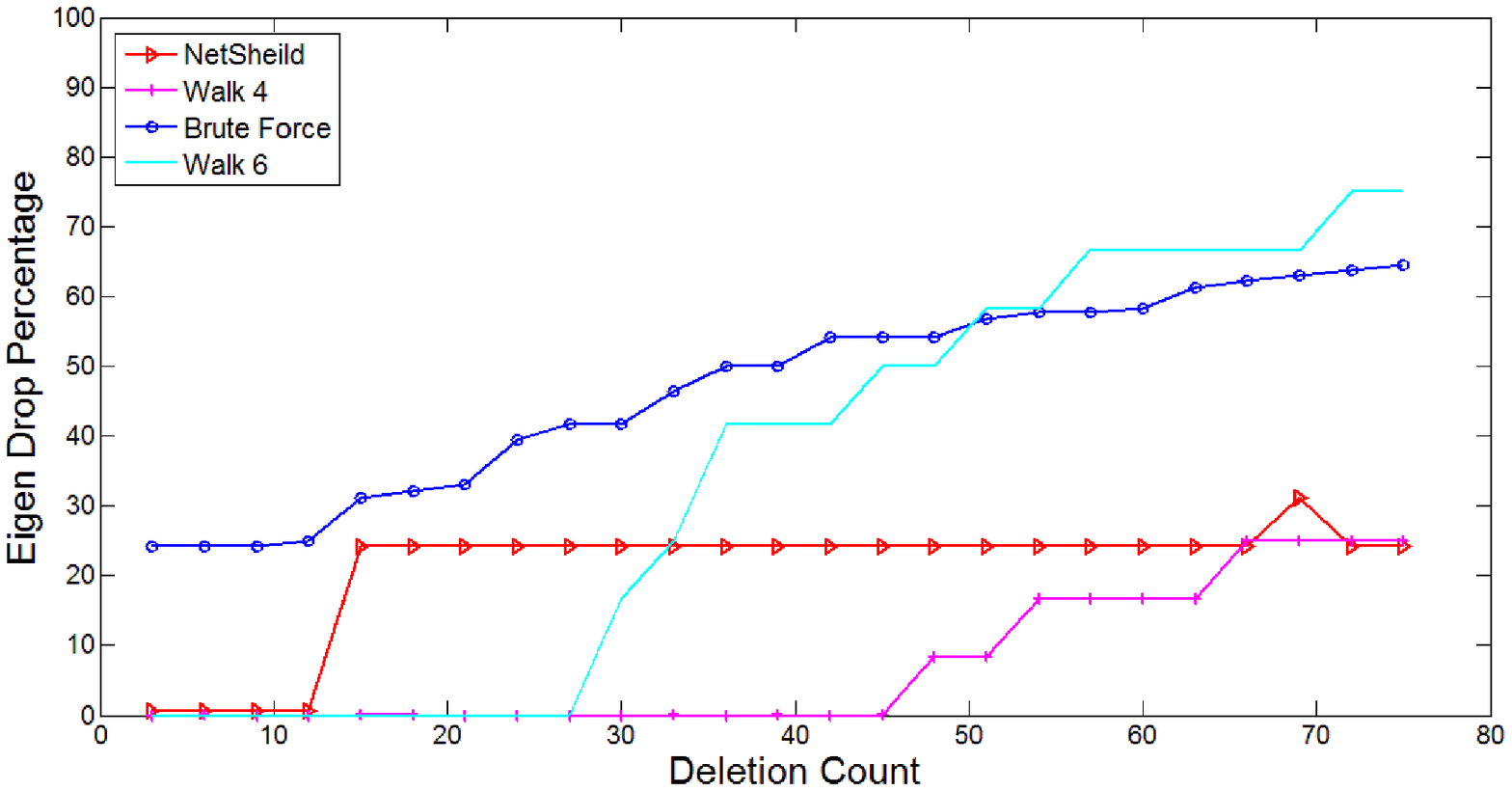}
	\caption{Eigendrop of DSS Graph}
  \label{fig:test4}
\end{minipage}
\end{figure}

\begin{figure}[h!]
\begin{minipage}{.5\textwidth}
  \centering
  \includegraphics[width=9cm,height=9cm,keepaspectratio]{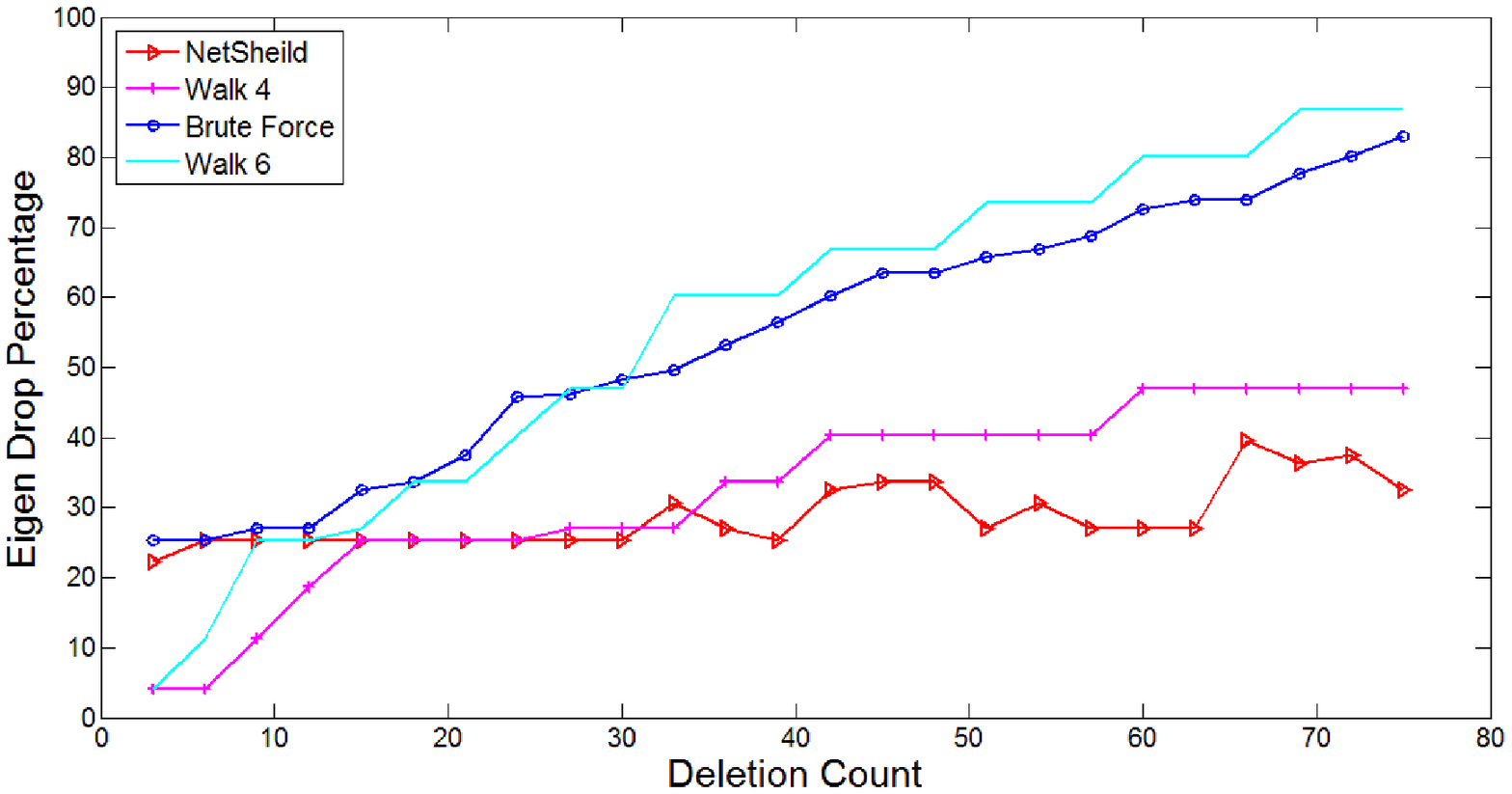}
	\caption{Eigendrop of Displays Graph}
  \label{fig:test3}
\end{minipage}%
\begin{minipage}{.5\textwidth}
  \centering
  \includegraphics[width=9cm,height=9cm,keepaspectratio]{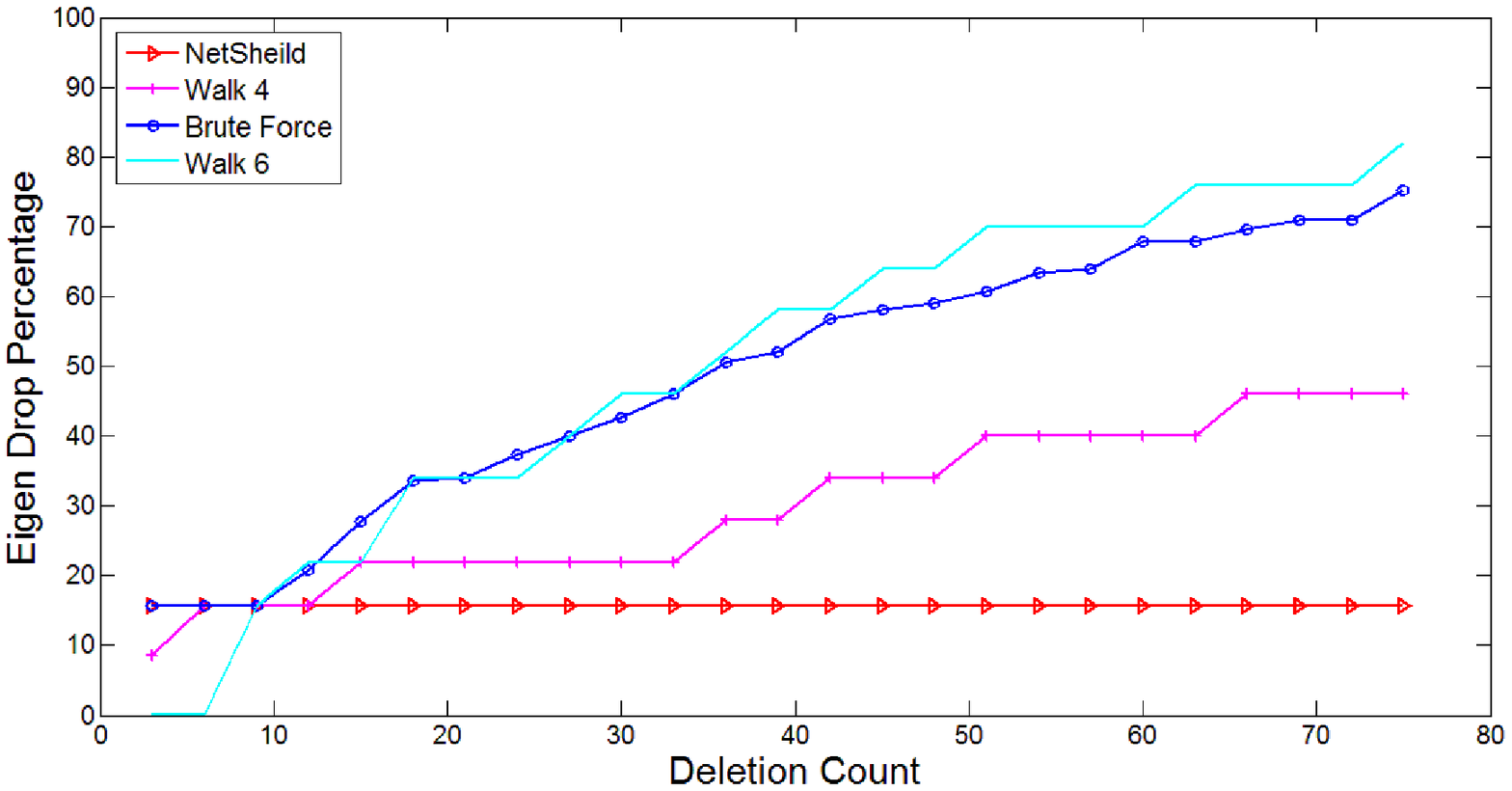}
	\caption{Eigendrop of Ecological Informatics Graph}

  \label{fig:test4}
\end{minipage}
\end{figure}

\begin{figure}[h!]
\begin{minipage}{.5\textwidth}
  \centering
  \includegraphics[width=9cm,height=9cm,keepaspectratio]{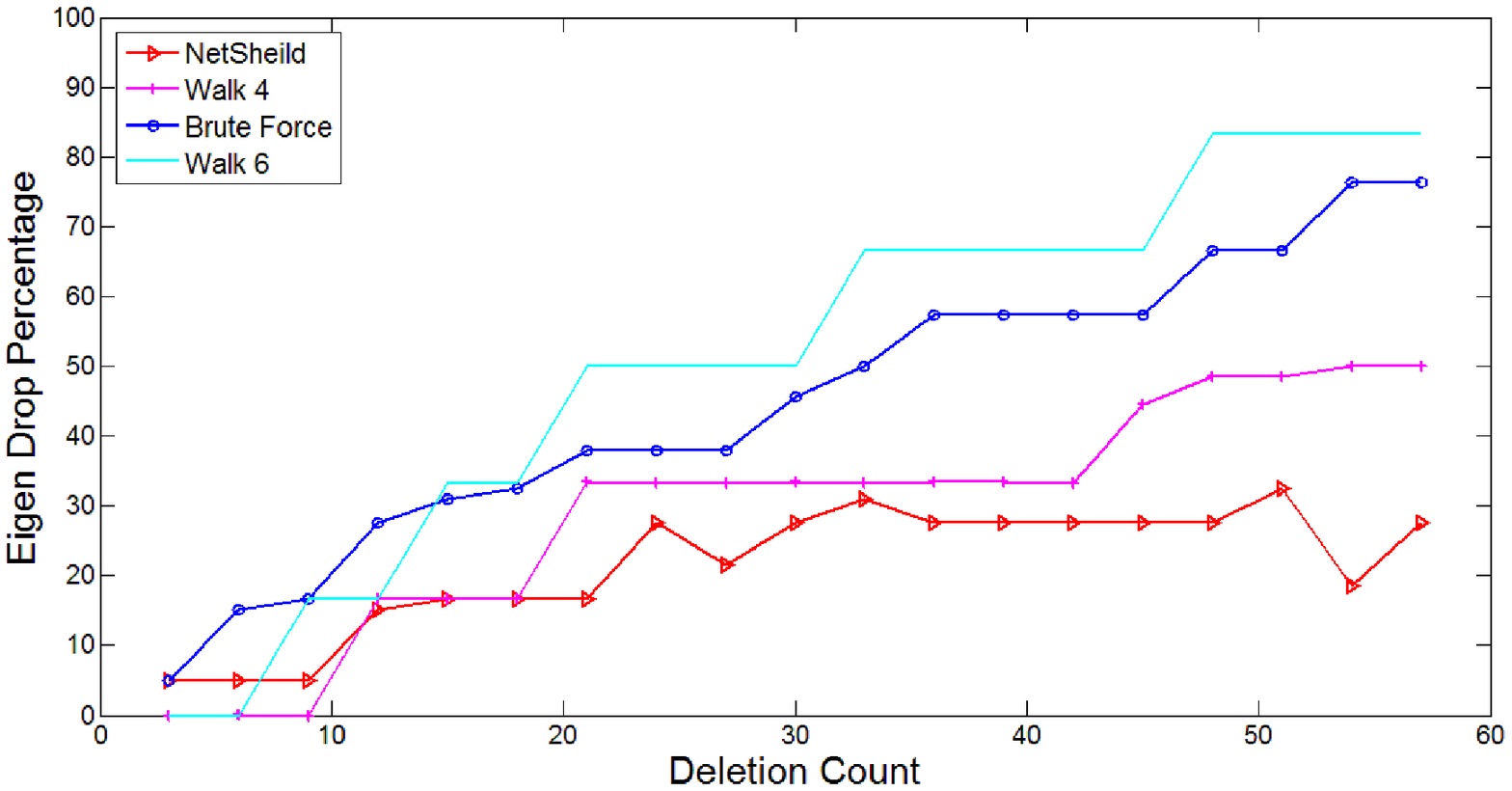}
	\caption{Eigendrop of IJCIA Graph}
  \label{fig:test3}
\end{minipage}%
\begin{minipage}{.5\textwidth}
  \centering
  \includegraphics[width=9cm,height=9cm,keepaspectratio]{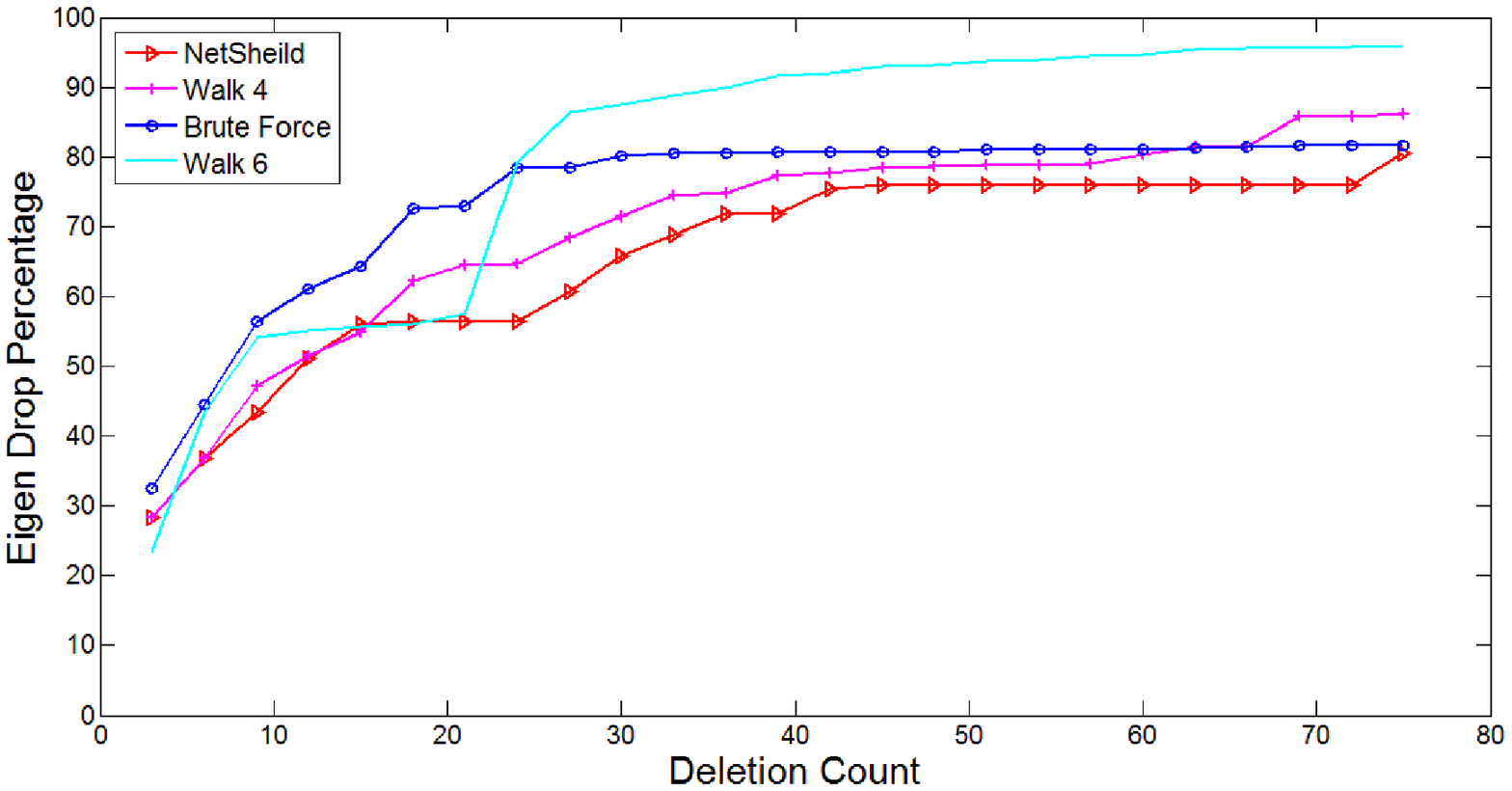}
	\caption{Eigendrop of Oregon Graph}

  \label{fig:test4}
\end{minipage}
\end{figure}

We performed extensive experimentation with varying number of nodes to be immunized in graph. In the results shown, x-axis shows the count of nodes being immunized denoted by $k$ while y-axis shows the benefit achieved in terms of percentage of eigen drop after immunizing $k$ nodes in graph. It is clear from results that our algorithm beats other variants for immunizing the graph in terms of effectiveness. Our algorithm has less computational cost than its competitors and is scalable for larger values of $k$ and also for large size graphs. It is worth mentioning that our algorithm achieves high accuracy in terms of approximation with very small $k$. Hence for very large graphs, our algorithm will achieve reasonable level of accuracy in very little time.

\section*{Related work}\label{section:related_work}

In this section we present a review of related work that has been done to target node immunization problem. A vast amount of work has been done to approach this problem using dimensions of spectral graph techniques, information diffusion and selection of central nodes in graph etc. In 2003 Brieseneister, Lincoln and Porras \cite{briesemeister2003epidemic} studied the propagation styles and infection strategies of viruses in communication networks to target susceptible nodes. They aim to do analysis of graphs to make them more defensible against infection. Along with this, the effects of graph topology in the spread of an epidemic are described by Ganesh, Massouli\'{e} and Towsley in \cite{ganesh2005effect} and they discuss the conditions under which an epidemic will eventually die out. Similarly Chakrabarti et. al in \cite{chakrabarti2008epidemic} devise a nonlinear dynamical system (NLDS) to model virus propagation in communication networks. They use the idea of \textit{birth rate, $\beta$, death rate,$\delta$, and epidemic threshold,$\tau$, } for a virus attack where {birth rate} is the rate with which infection propagates, {death rate} is the node curing rate and {epidemic threshold} is a value such that if $\beta / \delta \textless \tau$, infection will die out quickly else if  $\beta / \delta \textgreater \tau$ infection will survive and will result in an epidemic. For undirected graphs, they prove that epidemic threshold $\tau$ equals 1/$\lambda$ where $\lambda$ is largest eigenvalue of adjacency matrix $A$ of the graph. Thus for a given undirected graph, if $\beta / \delta \textless 1/\lambda$, then the epidemic will die out eventually.

The problem has also been addressed through edge manipulation schemes. In \cite{kuhlman2013blocking} dynamical systems are used to delete appropriate edges to minimize contagion spread. While Tong et al. in \cite{tong2012gelling} use the edge removal technique to protect a graph from outside contagion. They remove $k$ edges from the graph to maximize the eigendrop (difference in largest eigenvalues of original and resultant graphs) by selecting edges on the basis of corresponding left and right eigenvectors of leading eigenvalue of the graph such that for each edge $e_x$, score($e_x$) is the dot product of the left and right eigenvectors of leading eigenvalue of adjacency matrix of A.

Graph vulnerability is defined as measure of how much a graph is likely to be affected by a virus attack. As in \cite{tong2012gelling}, the largest eigenvalue of adjacency matrix is selected as a measure of graph vulnerability, in \cite{chen2016node} they also use largest eigenvalue for the purpose but instead of removing edges, nodes are deleted to maximize the eigendrop. Undirected, unweighted graphs are considered and nodes are selected by an approximation scheme using the eigenvector corresponding to largest eigenvalue which cause the maximum eigendrop. 

Probabilistic methods are also used for node immunization problem. Zhang et al. and Song et al. adapt the non-preemptive strategy i.e. selection of nodes for immunization is done after the virus starts propagating across the graph. For this they use discrete time model to obtain additional information of infected and healthy nodes at each time stamp. In \cite{song2015node} directed and weighted graphs are used in which weights represent the probability of a healthy node being contaminated by its affected neighbors and node selection is done on the basis of these probabilities. Then results are evaluated on the basis of save ratio (SR) which is the ratio between the number of infected nodes when k nodes are immunized over the number of infected nodes with no immunization. The work in \cite{zhang2014dava} and \cite{zhang2014scalable} considers undirected graphs and constructs dominator trees for selecting nodes. Results are evaluated in terms of expected number of remaining infected nodes in the graph after the process of immunization.

Other important and closely related problem is $k$ facility location and a lot of work is done on this. In filter placement \cite{erdos2012filter}, those nodes are identified whose deletion will maximally reduce information multiplicity in graph and node selection is done on the basis of number of paths passing through it. Moreover some reverse engineering techniques are also used for similar problems to find out the initial culprits of infection propagation. Prakash, Vreeken and Faloutsos \cite{prakash2012spotting} study the graphs in which virus has already spread for some time and they point out those nodes from where the spread started. From this they find out the likelihood of other nodes being affected.

Another direction to look at the problem is to consider graphs in which some nodes are already infected and these nodes can spread virus among other reachable nodes or graphs in which all nodes are contaminated and the goal is to decontaminate the graph by using some agent nodes which traverse along the edges of the graph and clean the nodes. The problem is usually referred to as decontamination of graph or graph searching problem. Different models are studied to solve the problem and most of them assume the monotonicity in decontamination i.e once a node is decontaminated then it cannot get contaminated again \cite{bienstockseymour1991},\cite{flocchini2008},\cite{flocchini2007},\cite{fraigniaudnisse2008}. But non-monotonic strategies are also studied \cite{daadaa2016network}.

Other work that is related to node immunization is the selection of most influential nodes in a given network to maximize the information diffusion in a network. Kempe et al. provided the provably efficient approximation algorithm for the problem \cite{kempe2003}. Seeman and Singer \cite{seeman2013adaptive} use stochastic optimization models to maximize the information diffusion in social networks. Influence maximization problem is slightly different from immunization problem as in influence maximization problem the goal is to select nodes for seeding which will maximize the spread on new idea while in node immunization problem the aim is to select nodes which will help in minimal spread of virus.

\section*{Conclusion}

In this work, we explored some links between established graph vulnerability measure and other spectral properties of even powers of adjacency matrix of the graph. We define shield value in terms of trace of the adjacency matrix of the graph. Based on these insights we present a greedy algorithm that iteratively selects $k$ nodes such that the impact of each node is maximum in the graph, in the respective iteration, and thus we maximally reduce the spread of a potential infection in the graph by removing those vertices. Our algorithm is scalable to large. We have done experimentation on different real world communication graphs to prove the accuracy and efficiency of our algorithm. Our algorithm beats the state of the art algorithms in performance as well as in quality.

For the future work, we consider the larger values and generalized even parameter $k$, used for our shield value. Hence, we will aim to effectively improve the quality of the estimates. Techniques like locality sensitive hashing can be incorporated for the efficient approximation of the shield value for general $k$.

\bibliographystyle{agsm}    


\bibliography{Improved_bib}

\end{document}